\newcommand{\beq}{\begin{equation}}
\newcommand{\eeq}{\end{equation}}
\newcommand{\beqa}{\begin{eqnarray}}
\newcommand{\eeqa}{\end{eqnarray}}
\newcommand{\cH}{{\cal H}}
\newcommand{\textoverline}[1]{$\overline{\mbox{#1}}$}
\newcommand{\ea}{\end{array}}
\def\eea{\end{eqnarray}}
\def\<{\langle}
\def\>{\rangle}
\def\bZ{\mathbb{Z}}
\def\cO{\mathcal{O}}
\newtheorem{thm}{Theorem}
\newtheorem{prop}{Proposition}
\newtheorem{conj}{Conjecture}
\newtheorem{cor}{Corrollary}
\theoremstyle{definition}
\newcommand{\ket}[1]{|#1\rangle}
\newcommand{\bra}[1]{\langle#1|}
\def\[#1\]{%
  \begin{equation}\begin{gathered}#1\end{gathered}\end{equation}%
}
\begin{document}
\title{$W$ state is not the unique ground state of any local Hamiltonian}
\author{Lei Gioia}
 \affiliation{Walter Burke Institute for Theoretical Physics, Caltech, Pasadena, CA, USA}
 \affiliation{Department of Physics, Caltech, Pasadena, CA, USA}
\affiliation{Kavli Institute for Theoretical Physics, Santa Barbara, California 93106, USA} 
\author{Ryan Thorngren}
\affiliation{Mani L. Bhaumik Institute for Theoretical Physics, Department of Physics and Astronomy,
University of California, Los Angeles, CA 90095, USA}
\affiliation{Kavli Institute for Theoretical Physics, Santa Barbara, California 93106, USA}
\affiliation{Institut des Hautes Études Scientifiques, 91440 Bures-sur-Yvette, France}

\begin{abstract}
The characterization of ground states among all quantum states is an important problem in quantum many-body physics. For example, the celebrated entanglement area law for gapped Hamiltonians has allowed for efficient simulation of 1d and some 2d quantum systems using matrix product states. Among ground states, some types, such as cat states (like the GHZ state) or topologically ordered states, can only appear alongside their degenerate partners, as is understood from the theory of spontaneous symmetry breaking. In this work, we introduce a new class of simple states, including the $W$ state, that can only occur as a ground state alongside an \textit{exactly} degenerate partner, even in gapless or disordered models. We show that these states are never an element of a stable gapped ground state manifold, which may provide a new method to discard a wide range of `unstable' entanglement area law states in the numerical search of gapped phases. On the other hand when these degenerate states are the ground states of gapless systems they possess an excitation spectrum with $O(1/L^2)$ finite-size splitting. One familiar situation where this special kind of gaplessness occurs is at a Lifshitz transition due to a zero mode; a potential quantum state signature of such a critical point. We explore pathological parent Hamiltonians, and discuss generalizations to higher dimensions, other related states, and implications for understanding thermodynamic limits of many-body quantum systems.
\end{abstract}
\maketitle

On a length $L$ 1d chain of spin-$\frac12$'s the $W$ state is defined as 
$\ket{W_1}=\frac{1}{\sqrt{L}}\left(\ket{10...0}+\ket{010...0}+...+\ket{0...01}\right)$.
This state has received much attention in the literature \cite{joo2003quantum,frowis2013certifiability,Faist_2020,lightmatterW,RevModPhys.93.045003,10.21468/SciPostPhys.15.4.131}, as it is representative of a class of states distinct both from short-range entangled states and from familiar long-range entangled states such as macroscopic superpositions (e.g. GHZ states) and topologically ordered states (e.g. toric code states), for instance in the entanglement classification \cite{D_r_2000}. It has also become a target for state-preparation protocols on existing and near-term quantum hardware \cite{zang2015generating,Cruz_2019,heo2019scheme}. In this letter, we provide a sharp characterization in the form of a no-go theorem, which roughly says the $W$ state (and its relatives, we dub \textit{wh\textoverline{a}nau-states}) is a ground state of a local Hamiltonian only if the all-zero state $\ket{0}$ is as well. This provides a barrier for adiabatic preparation of the $W$ state despite its low area-law entanglement entropy, existence of finite bond dimension matrix-product state description~\cite{mpspaper,RevModPhys.93.045003}, and general `simplicity' of the state. We also describe some condensed matter implications of the theorem, which turns out to be deeply intertwined with the physics of quantum Lifshitz transitions \cite{voloviklifshitz}, and challenges our current concepts of the thermodynamic limit.

Our results apply to a very broad class of Hamiltonians which includes the usual translation-invariant (with any unit cell) or disordered Hamiltonians studied in condensed matter, as well as more exotic Hamiltonians with domain walls and other defects inserted in particular ways. However, it rules out some pathological Hamiltonians we consider later in the paper, such as those whose coefficients depend explicitly on $L$ or all-to-all Hamiltonians.
Specifically, we define a (infinite or half-infinite, periodic or open) \emph{finite-range 1d Hamiltonian system} to be a sequence of Hilbert spaces $\cH_k$ and a sequence of bounded norm operators $h_k$ acting on $\bigotimes_{j = k-l}^{k+l} \cH_j$ (for some fixed range $l$), where either $k \in \bZ$ in the infinite case or $k \in \bZ_{> 0}$ in the half-infinite case. For each system size $L$, these define Hamiltonians $H_L = \sum_{k= -\lfloor (L-1)/2 \rfloor}^{\lceil (L-1)/2 \rceil} $ or $H_L = \sum_{k=1}^L h_k,$
on each length $L$ chain $\cH(L) = \bigotimes_k \cH_k$ (with ranges as above), with either periodic or open boundary conditions, where in the periodic case, we let the action of $h_k$ within $l$ of the edges of the range wrap around the chain. Note that the periodic case requires some identification between the local Hilbert spaces $\cH_k$.

Consider a length $L$ 1d chain of spin-$\frac12$'s. Starting with the ``all-zero'' state $\ket{0}$, defined by $Z_j \ket{0} = \ket{0}$ for all $j$, for each $n$, we can further define the generalization of the $W$ state to be
\[\ket{W_n} = {L \choose n}^{-1/2} \sum_{i_1 < \cdots < i_n} X_{i_1} \cdots X_{i_n} \ket{0},\]
where $\ket{W_1}$ is simply the $W$ state.
State $\ket{\psi_L}$ defined on a sequence of length $L$ Hilbert spaces as above, such as $\ket{W_n}$, is a ground state of a local Hamiltonian system if for some large enough sequence of $L$, $\ket{\psi_L}$ is a lowest energy eigenstate of $H_L$. With these definitions, we can state our main result:

\begin{thm}
\label{thm:Wstate}
    The $W$ state (and more generally $\ket{W_n}$) is not the unique ground state of any finite-range 1d Hamiltonian system. In particular, if it is a ground state, $\ket{0}$ is also an exactly degenerate ground state.
\end{thm}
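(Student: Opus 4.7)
The plan is to combine the local indistinguishability of $\ket{W_n}$ from $\ket{0}$ at large $L$ with the variational characterization of the ground state applied to the family of Dicke states $\ket{W_m}$ for $m=0,1,\ldots,L$.

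First, I would compute the reduced density matrix of $\ket{W_n}$ on any local region $R$ of size $r$: using the decomposition $\ket{W_n}=\sum_{m=0}^{n}c_m\ket{W_m^R}\otimes\ket{W_{n-m}^{L-r}}$ with $c_m^2=\binom{r}{m}\binom{L-r}{n-m}/\binom{L}{n}$, one obtains the hypergeometric mixture $\rho_{W_n,R}=\sum_m c_m^2\ket{W_m^R}\bra{W_m^R}$, with vacuum weight $c_0^2=1-O(rn/L)$. Consequently, for each local term $H_a$ of range $r_a$, $|\bra{W_n}H_a\ket{W_n}-\bra{0}H_a\ket{0}|=O(\|H_a\|\,r_a\,n/L)$.

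Second, expanding the variational energy $\bra{W_m}H_L\ket{W_m}$ for every $m$ using this decomposition and summing the $L$ local contributions, I would establish the leading-order identity $\bra{W_m}H_L\ket{W_m}-\bra{0}H_L\ket{0}=m(E_1-E_0)+O(m^2/L)$, where $E_0:=\bra{0}H_L\ket{0}$ and $E_1:=\bra{W_1}H_L\ket{W_1}$. The ground-state hypothesis $\bra{W_n}H_L\ket{W_n}\le\bra{W_{n\pm 1}}H_L\ket{W_{n\pm 1}}$ then forces $|E_1-E_0|=O(n/L)$, so that $\bra{W_n}H_L\ket{W_n}-\bra{0}H_L\ket{0}=O(n^2/L)\to 0$ as $L\to\infty$. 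Combined with the variational inequality $\bra{W_n}H_L\ket{W_n}\le\bra{0}H_L\ket{0}$, this shows that $\ket{0}$ asymptotically achieves the ground-state energy.

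I expect the main obstacle to be promoting this asymptotic degeneracy to the \emph{exact} finite-$L$ equality claimed in the theorem statement. The variational energy $m\mapsto\bra{W_m}H_L\ket{W_m}$ is a polynomial in $m$ of degree at most $\max_a r_a$, and the ground-state condition pins $m=n$ as its discrete minimum on $\{0,1,\ldots,L\}$. To close the gap, I would combine this polynomial structure with positivity of $H_L-E$ on the full Hilbert space, arguing that any polynomial of such bounded degree that is minimized at $m=n$ while remaining compatible with the global ground-state condition must also attain its minimum at $m=0$, yielding $\bra{0}H_L\ket{0}=\bra{W_n}H_L\ket{W_n}$ exactly. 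An alternative route, which I would pursue in parallel, is to use the eigenvalue equation $H_L\ket{W_n}=E\ket{W_n}$ directly to show that the components of $H_L\ket{0}$ orthogonal to $\ket{0}$ must vanish.
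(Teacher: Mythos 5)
Your first two steps are essentially a correct independent derivation of the paper's Proposition 1 --- that $\langle W_m|H_L|W_m\rangle-\langle 0|H_L|0\rangle=m(E_1-E_0)+O(m^2/L)$ --- obtained via the hypergeometric reduced density matrix rather than the paper's cluster-decomposition bookkeeping, and combining this with the variational inequalities $E_n\le E_{n\pm1}$ does correctly give $|E_1-E_0|=O(1/L)$ and hence \emph{asymptotic} degeneracy. You have also correctly located the hard part: promoting $E_n-E_0=O(1/L)$ to exact equality. But neither of your proposed closing moves works. The polynomial structure of $m\mapsto E_m$ at fixed $L$ cannot force $E_n=E_0$: the bounded-degree profile $P(m)=\epsilon(m^2-2nm)$ with $\epsilon=O(1/L)$ vanishes at $m=0$, is minimized at $m=n$ with $P(n)=-\epsilon n^2<0$, and is consistent with everything you have derived. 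More decisively, at any \emph{fixed} $L$ there do exist finite-range Hamiltonians with $\ket{W_n}$ as strict unique ground state and $E_n-E_0=-O(1/L^2)$ (the paper's Lifshitz Hamiltonian with chemical potential $\cos(\pi/L)$ is one), so no argument carried out at a single system size can possibly yield exactness; the $L$-independence of the local terms $h_k$ across different sizes must enter somewhere, and it enters your argument nowhere.

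The missing ingredient is the paper's Proposition 2: if $\ket{W_n}$ is an \emph{eigenstate}, then $\Delta_n=\langle 0|H_L|0\rangle-\langle W_n|H_L|W_n\rangle$ is \emph{exactly} independent of $L$ for $L>n+2l$. The mechanism is close to your second ``alternative route,'' but applied to $\ket{W_n}$ rather than to $\ket{0}$: after normal-ordering so that $\langle 0|h_k|0\rangle=0$, one uses $\bra{0}X_jH_LX_i\ket{0}=0$ for $|i-j|>2l+1$ to read the eigenvalue off the coefficient of a single configuration such as $X_iX_{i+1}\cdots X_{i+n-1}\ket{0}$ in $H_L\ket{W_n}=(E_n-E_0)\ket{W_n}$; that coefficient is a matrix element supported in a window of size $O(l+n)$ and is therefore manifestly the same number for every sufficiently large $L$. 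A quantity that is both an exact constant in $L$ and $O(1/L)$ is zero, and $\langle 0|H_L|0\rangle=E_n$ then certifies $\ket{0}$ as a ground state by the variational principle --- there is no need (and no evident way) to first show that $H_L\ket{0}$ has no component orthogonal to $\ket{0}$, which is a stronger statement than the theorem requires.
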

The idea of the proof is that any negative energy the $W$ state has relative to $\ket{0}$ must be associated with the single 1. When we study $\ket{W_2}$, it has two 1s, which are likely far apart and each contribute the same negative energy, so $\ket{W_2}$ must therefore have even less energy than the $W$ state, meaning the $W$ state could not have been the unique ground state.

The previous best result is that the gap above the $W$ state scales as $O(1/L^{3/2})$ \cite{macroscopicsuperpositions}. We show that beyond the \emph{exact} degeneracy with $\ket{0}$, by studying the ``boosted $W$ states'' we find an $O(1/L^2)$ excitation spectrum.

\section{Proof of the No-Go Theorem}

Let us now give the proof of the no-go theorem regarding the $\ket{W_n}$ states. We will give two propositions that capture what is really special about the $W_n$ states and then synthesize them into a proof of the theorem. First, we will show that in the large $L$ limit, the expected energy differences between $\ket{W_n}$ and $\ket{W_{n+1}}$ are equal for all $n$. In particular, we have the following:

\begin{prop}
Consider a finite-range 1d Hamiltonian system with Hamiltonians $H_L$. (We do not assume $\ket{0}$ or $\ket{W_n}$ are eigenstates.) Define
\[\Delta_1 := \langle 0 |H_L | 0 \rangle - \langle W_1 | H_L | W_1 \rangle,\]
then for all $n>1$,
\[\Delta_n := \langle 0 |H_L| 0 \rangle - \langle W_n | H_L | W_n \rangle = n\Delta_1 + O(1/L).\]
\end{prop}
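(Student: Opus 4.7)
The plan is to localize the estimate to each local term and then sum. Writing $H_L = \sum_k h_k$ and setting $\Delta_n^{(k)} := \langle 0|h_k|0\rangle - \langle W_n|h_k|W_n\rangle$, it suffices to prove the uniform per-term bound $\Delta_n^{(k)} = n\Delta_1^{(k)} + O(1/L^2)$, because there are $\Theta(L)$ local terms and summing then yields $\Delta_n = n\Delta_1 + O(1/L)$.

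The heart of the argument is a particle-number decomposition about each $h_k$. Fix $k$, let $A$ be the support of $h_k$ (of size $a \leq 2l+1$), and let $B$ be its complement in the chain. Because $|W_n\rangle$ is an eigenvector of the global $|1\rangle$-count operator $\sum_j \ket{1}_j\bra{1}_j$ with eigenvalue $n$, it decomposes cleanly by the number $m$ of ones falling inside $A$: $|W_n\rangle = \sum_{m=0}^{n} c_{n,m}\, |\phi_m^A\rangle \otimes |\phi_{n-m}^B\rangle$, where $|\phi_j^R\rangle$ denotes the uniform superposition of all basis states on region $R$ with exactly $j$ sites in state $|1\rangle$, and $|c_{n,m}|^2 = \binom{a}{m}\binom{L-a}{n-m}/\binom{L}{n}$ is the associated hypergeometric weight. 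The key observation is that the outer factors $|\phi_{n-m}^B\rangle$ lie in distinct particle-number sectors of $B$ and are therefore mutually orthogonal; since $h_k$ is supported on $A$, the cross terms vanish and the matrix element collapses to $\langle W_n|h_k|W_n\rangle = \sum_m |c_{n,m}|^2\, \langle \phi_m^A|h_k|\phi_m^A\rangle$. Combining with $\langle 0|h_k|0\rangle = \langle \phi_0^A|h_k|\phi_0^A\rangle$ and $\sum_m |c_{n,m}|^2 = 1$ yields $\Delta_n^{(k)} = \sum_{m\geq 1} |c_{n,m}|^2 \bigl[\langle \phi_0^A|h_k|\phi_0^A\rangle - \langle \phi_m^A|h_k|\phi_m^A\rangle\bigr]$.

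The final step is asymptotic analysis of the hypergeometric weights. For fixed $n$ and $a \leq 2l+1$, an elementary expansion of $\binom{L-a}{n-1}/\binom{L}{n}$ gives $|c_{n,1}|^2 = na/L + O(1/L^2)$, while $|c_{n,m}|^2 = O(L^{-m})$ for $m \geq 2$. Using the uniform bound $\|h_k\| \leq C$, the $m \geq 2$ contributions collectively amount to $O(1/L^2)$, leaving $\Delta_n^{(k)} = (na/L)\bigl[\langle \phi_0^A|h_k|\phi_0^A\rangle - \langle \phi_1^A|h_k|\phi_1^A\rangle\bigr] + O(1/L^2)$. The $n=1$ instance gives $\Delta_1^{(k)} = (a/L)\bigl[\langle \phi_0^A|h_k|\phi_0^A\rangle - \langle \phi_1^A|h_k|\phi_1^A\rangle\bigr]$ exactly, hence $\Delta_n^{(k)} = n\Delta_1^{(k)} + O(1/L^2)$ with implicit constants depending only on $n$, $l$, and $\sup_k \|h_k\|$. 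The principal subtlety is the orthogonality collapse in the second paragraph: without exploiting the shared definite-$|1\rangle$-count structure of $|W_n\rangle$ and $|0\rangle$, off-diagonal elements $\langle \phi_m^A|h_k|\phi_{m'}^A\rangle$ with $m \neq m'$ need not vanish (we do not assume $h_k$ preserves particle number locally) and would obstruct the diagonal reduction. Boundary effects are immaterial, since they only alter $a$ for an $O(1)$ number of local terms.
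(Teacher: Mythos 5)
Your proof is correct, and it takes a genuinely different route from the paper's. The paper normal-orders the $h_k$, expands $\Delta_n$ as a double sum over spin-flip configurations $\{i_m\},\{j_m\}$, uses $\langle X_i\rangle=0$ in the product state to restrict to configurations where each flipped spin pairs with another or with $h_k$, discards the $O(1/L)$ fraction of configurations with two nearby flips, and then shows that well-separated configurations contribute a sum of $n$ independent single-flip energies $E(i_m)$, finishing with a combinatorial count. You instead fix a single local term $h_k$ and observe that the reduced state of $\ket{W_n}$ on its support $A$ is the hypergeometric mixture $\rho_A=\sum_m |c_{n,m}|^2 \ket{\phi_m^A}\bra{\phi_m^A}$ (the orthogonality of the complementary factors $\ket{\phi_{n-m}^B}$ killing the off-diagonal terms is exactly the right observation, and correctly does not require $h_k$ to conserve particle number), then expand the weights in $1/L$. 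Your route is cleaner and more quantitative: it gives a per-term $O(1/L^2)$ error with explicit constants and an exact closed form for $\Delta_1^{(k)}$, with no need for normal ordering or a separate treatment of nearby-flip configurations. What the paper's cluster-factorization route buys is generality: it extends (as sketched in the paper's appendix) to the case where $\ket{0}$ is not a product state but only satisfies exponential cluster decomposition and the $W$-like states are built from charged local operators $S_i$, where no exact hypergeometric decomposition of the reduced density matrix is available; it also isolates the local single-flip energy $E(i)$, which is reused in the locality argument of Proposition 2. One small point to make explicit in your write-up: the uniformity of the per-term error over $k$ requires $\sup_k\|h_k\|<\infty$, which is the intended reading of the paper's bounded-norm hypothesis and which you do invoke.
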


The idea of the proof is to write $\Delta_1$ as a sum of contributions from where the $X_i$'s are inserted in $\ket{0}$. Then, since the expected energy is local, we only get nonzero contributions when $h_k$ is near where the $X_i$'s are inserted. Then, when we compute $\Delta_n$, most of the contribution is from when the $X_i$ are distantly separated, so each will contribute independently a factor of $\Delta_1$, up to errors going to zero with $L$. The proof is detailed in the Supplemental material~\cite{supp}.

Another very special property of the $W$-state is:
\begin{prop}
    If $\ket{W_n}$ is an eigenstate, then $\Delta_n$ is independent of $L$ for $L > n + 2l$.
\end{prop}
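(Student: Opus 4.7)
The plan is to use the eigenstate condition to recast $\Delta_n$ as a sum of purely local quantities, from which $L$-independence is immediate.

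For the setup, write $\ket{W_n} = \binom{L}{n}^{-1/2}\sum_{|J|=n}\ket{J}$, where $\ket{J} := \prod_{j\in J}X_j\ket{0}$. Projecting the eigenvalue equation $H_L\ket{W_n} = E_n\ket{W_n}$ onto any computational basis state $\ket{I}$ with $|I|=n$ gives, for every $n$-subset $I\subset[L]$,
\[\sum_{|J|=n}\langle I|H_L|J\rangle = E_n.\]

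Next I would unpack each $h_k$ using finite-range locality. Let $S_k$ denote the support of $h_k$. Using the tensor factorization across $S_k$ and its complement, $\langle I|h_k|J\rangle = \langle I\cap S_k|h_k|J\cap S_k\rangle_{S_k}\,\delta_{I\setminus S_k,\,J\setminus S_k}$; the delta constraint forces $|J\cap S_k| = |I\cap S_k|$, so summing over $J$ at fixed total size $n$ leaves only $J\cap S_k$ free:
\[\sum_{|J|=n}\langle I|h_k|J\rangle \;=\; B_k(I\cap S_k), \qquad B_k(T) \;:=\; \sum_{\substack{T'\subset S_k\\ |T'|=|T|}}\langle T|h_k|T'\rangle_{S_k}.\]
This $B_k(\cdot)$ is a purely local quantity determined by $h_k$ alone, with $B_k(\emptyset) = \langle 0|h_k|0\rangle$. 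The eigenstate condition then reads $\sum_k B_k(I\cap S_k) = E_n$ for every $n$-subset $I$.

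Subtracting from $\langle 0|H_L|0\rangle = \sum_k B_k(\emptyset)$, and observing that the summand vanishes whenever $S_k\cap I=\emptyset$,
\[\Delta_n \;=\; \sum_{k:\,S_k\cap I\neq\emptyset}\bigl[B_k(\emptyset) - B_k(I\cap S_k)\bigr].\]
The right-hand side has at most $n(2l+1)$ terms and only involves the fixed local Hamiltonian terms $h_k$ in the $l$-neighborhood of $I$. Fixing $I$ once and for all (e.g.\ $I=\{1,\ldots,n\}$), which together with its $l$-neighborhood fits inside the chain once $L > n+2l$, this expression for $\Delta_n$ makes no reference to $L$.

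The main obstacle is the bookkeeping in the second step: carefully reducing the constrained global sum $\sum_{|J|=n}\langle I|h_k|J\rangle$ to the local quantity $B_k(I\cap S_k)$, including the combinatorial fact that the matching of sizes on $S_k^c$ forces $|J\cap S_k|=|I\cap S_k|$. Once this reduction is clean, the $L$-independence of $\Delta_n$ is forced by the simple cancellation $B_k(\emptyset) - B_k(I\cap S_k) = 0$ on any bond $k$ disjoint from $I$.
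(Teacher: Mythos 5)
Your proof is correct and follows essentially the same route as the paper's: project the eigenvalue equation onto a single spin configuration and use the finite range of the $h_k$ to reduce $\Delta_n$ to a sum of finitely many purely local quantities that make no reference to $L$. Your tensor-factorization bookkeeping via the functions $B_k$ is a clean packaging of the paper's key observation that $\bra{0}X_j H_L X_i\ket{0}=0$ for $|i-j|>2l+1$, with the added benefit of treating general $n$ uniformly instead of reducing to a particular configuration of adjacent $1$'s.
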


\begin{proof}
    Let us first demonstrate the case for $n = 1$.

    It is convenient to write the Hamiltonian in normal ordered form
    \[ H_L = \sum_k :h_k: + C_k,\]
    where $\langle 0 | :h_k: | 0\rangle = 0$. We are free to set $C_k = 0$ since we are only interested in energy differences. Then $H_L \ket{W_1} = \Delta_1 \ket{W_1}$. Let us focus on one term $\frac{\Delta_1}{\sqrt{L}}X_i \ket{0}$ which must appear in $H_L \ket{W_1}$. For $j$ with $|j-i| > 2l+1$,
    \[\label{eqnXvanishing}\bra{0} X_j H_L X_i \ket{0} = 0,\]
    (this is the special property of $\ket{W_1}$) so this term must come from $H_L$ applied to
    \[\frac{1}{\sqrt{L}} \sum_{j = i-l}^{i+l} X_j \ket{0}.\]
    We can isolate it by forming the matrix element (which also takes care of the normalization), giving
    \[\Delta_1 = \bra{0} X_i H_L \sum_{j = i-1}^{i+1} X_j \ket{0}.\]
    Finally, since $H$ is composed of finite range, normal ordered terms, we can discard pieces that cannot connect $i$ and $j$, so
    \[\Delta_1 = \bra{0} X_i \sum_{k = i-l}^{i+l} \sum_{j = i-l}^{i+l} :h_k: X_j \ket{0},\]
    which is manifestly independent of $L$ when $L > 2l+1$.

    To prove the case for general $n$, we look instead at a particular configuration of 1's, such as when they are all next to each other, by studying the piece $X_{i} X_{i+1} \cdots X_{i+n-1} \ket{0}$ appearing in $\ket{W_n}$. The proof goes through as above.
\end{proof}

Combining these two propositions, we can finish the proof of Theorem 1:

\begin{proof}
    Suppose towards a contradiction that for some $n > 0$, and all large enough $L$, $\ket{W_n}$ is the lowest energy eigenstate. This means $\Delta_n$ is eventually greater than zero, and in fact by proposition 2 it is eventually a positive constant. By proposition 1, we thus see $\Delta_1 = \frac{1}{n} \Delta_n + O(1/L)$ is eventually positive. Using proposition 1 again,
    \[\bra{W_n} H_L \ket{W_n} - \bra{W_{n+1}} H_L \ket{W_{n+1}} = \Delta_{n+1} - \Delta_n \\ = \Delta_1 + O(1/L)\]
    is also eventually positive, so $\ket{W_{n+1}}$ has even lower energy than $\ket{W_n}$, a contradiction!

    Therefore, if $\ket{W_n}$ is a ground state, $\Delta_n = 0$, so $\ket{0}$ is also a ground state.

\end{proof}

The argument above may seem in contradiction with the existence of \emph{any} ground state of $H$ if $\Delta_1 > 0$. However, there is a trick with the order of limits. The error in the formula for $\Delta_n$ is $O(1/L)$ (unless they are eigenstates) but also grows linearly with $n$, so the true ground state in the thermodynamic limit is one with some nonzero ``charge density'' $n/L$, which is what we physically expect. We return to this point below.

Also, it is very important in Proposition 2 that $\ket{W_n}$ is assumed to be an eigenstate. Only this way can we compute $\Delta_n$ locally. When it is not an eigenstate, $\Delta_n$ requires an average over the whole system, which introduces $L$ dependence. Without it, we would conclude all $\ket{W_n}$'s must be degenerate, however we will demonstrate below a Hamiltonian with only $\ket{0}$ and the $W$-state as its two ground states.

If we relax the finite-range condition, and just ask that the support of the terms $h_k$ fall off with some prescribed decay, such as exponential or power law, we expect our results will still hold, up to finite-size splittings with the same decay. Higher dimensional generalizations apply straightforwardly to states such as $\frac{1}{L^{d/2}} \sum_i X_i \ket{0}$, where $i$ ranges over a $d$-dimensional lattice.

\section{Consequences}

In this Section we present some immediate consequences of Theorem~\ref{thm:Wstate} for both gapless and gapped Hamiltonians.

\subsection{Gapless Hamiltonians \& Lifshitz transitions}

\begin{figure*}
    \centering
   \includegraphics[width=0.8\textwidth]{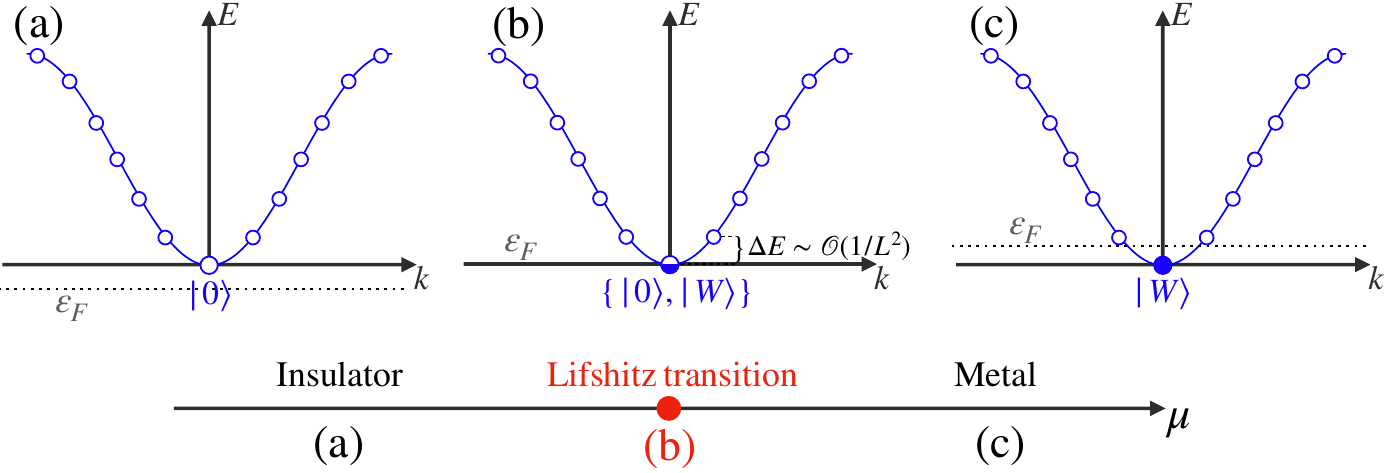}
    \caption[]{ A simple Lifshitz transition is depicted. In (a) we have a fully empty band since the Fermi energy $\epsilon_F$ is below the band - this corresponds to the insulating phase. As we increase the chemical potential $\mu$ we arrive at a critical Lifshitz transition as depicted in (b). Here the ground state manifold contains both the $\ket{0}$ and $\ket{W}$ states, a hallmark of a Lifshitz transition. In (c) we increase the chemical potential further such that the Fermi energy lies within $O(1/L^2)$ above the $\ket{W}$ state we have the $\ket{W}$ state as the ground state. However to maintain this state as the sole ground state one has to continuously tune the chemical potential with increasing system size, otherwise one creates a finite density Fermi surface corresponding to a metal.
    }\label{fig:simplestham}
\end{figure*}

One interesting question is what sort of low-energy excitations exist as a consequence of having the $\ket{W_1}$ state as a ground state. One may show the following statement, where the proof is given in the Supplementary materials~\cite{supp},
\begin{cor}
    For local Hamiltonians with ground state $\ket{W_1}$, the ``boosted $W$-states'' $\ket{W_1^{(m)}}\equiv e^{i\frac{2\pi}{L}m\sum_{i}x_i\hat{n}_i}\ket{W}$ of momentum boosts $2\pi m/L$ ($m$ independent of $L$), will be at most a $O(1/L^2)$ energy expectation value difference above the ground state.
    \label{cor:momentumguys}
\end{cor}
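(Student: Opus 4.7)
The plan is to write $\ket{W_1^{(m)}} = L^{-1/2}\sum_i e^{i\alpha x_i}X_i\ket 0$ with $\alpha = 2\pi m/L = O(1/L)$, and expand the local energy expectation values in powers of this small parameter. Each local term contributes
\[\bra{W_1^{(m)}}h_k\ket{W_1^{(m)}} = \frac{1}{L}\sum_{i,j}e^{i\alpha(x_i-x_j)}\bra 0 X_j h_k X_i\ket 0,\]
and the same locality observation used in Eq.~(\ref{eqnXvanishing}) restricts the nonzero matrix elements to either (i) $i = j$ outside $[k-l, k+l]$, which contribute phase-independently and thus cancel in the difference with the unboosted ($m=0$) state, or (ii) both $i, j \in [k-l, k+l]$, in which case $|x_i - x_j|\leq 2l$ is bounded by the Hamiltonian range independently of $L$. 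Consequently $e^{i\alpha(x_i - x_j)}$ is safely Taylor-expandable in the uniformly small quantity $\alpha(x_i - x_j)$, and the proof reduces to controlling the first- and second-order pieces of $\sum_k[\bra{W_1^{(m)}}h_k\ket{W_1^{(m)}} - \bra{W_1}h_k\ket{W_1}]$.

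The first-order piece collects to
\[\frac{i\alpha}{L}\sum_k\sum_{i,j\in[k-l,k+l]}(x_i-x_j)\bra 0 X_j h_k X_i\ket 0.\]
Using the identity $K X_i\ket 0 = \alpha x_i X_i\ket 0$ with $K = \alpha\sum_i x_i \hat n_i$ the generator of the boost — the number-weighted position is diagonal on the single-particle basis $\{X_i\ket 0\}$ — this piece equals precisely $-i\bra{W_1}[K, H_L]\ket{W_1}$, which vanishes because $\ket{W_1}$ is assumed to be an eigenstate of $H_L$. This is the spiritual analogue of first-order Hellmann--Feynman cancellation, and it is essentially the only place the eigenstate hypothesis enters.

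The second-order piece, $-\frac{\alpha^2}{2L}\sum_k\sum_{i,j\in[k-l,k+l]}(x_i-x_j)^2\bra 0 X_j h_k X_i\ket 0$, is bounded in magnitude by $\frac{\alpha^2}{2L}\cdot L\cdot (2l+1)^2(2l)^2\max_k\|h_k\| = O(m^2/L^2)$, and the remaining Taylor tail contributes at most $O(m^3/L^3)$. The main obstacle, such as it is, lies in the careful bookkeeping: verifying that the phase-independent contributions from $i = j$ outside the support of $h_k$ genuinely cancel in the difference between boosted and unboosted expectation values, and that the Taylor expansion remains uniform in $k$ (which is automatic because $|x_i - x_j| \leq 2l$ is fixed). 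Combining these gives $\bra{W_1^{(m)}}H_L\ket{W_1^{(m)}} - \bra{W_1}H_L\ket{W_1} = O(1/L^2)$ for $m$ independent of $L$, as claimed.
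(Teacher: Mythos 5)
Your overall strategy --- expand the relative phases $e^{i\alpha(x_i-x_j)}$ with $\alpha=2\pi m/L$, argue the first-order piece vanishes, and bound the second-order piece by $O(1/L^2)$ --- is essentially the paper's (which conjugates each local term by the boost unitary and expands the exponential), and your case analysis of which matrix elements $\bra{0}X_j h_k X_i\ket{0}$ survive, as well as the second-order bookkeeping, are fine. The genuine gap is in how you kill the first-order term, and it appears precisely for periodic chains, which are explicitly allowed by the definition of a finite-range 1d Hamiltonian system (and are the setting of the paper's main examples). For a term $h_k$ whose support wraps around the boundary, a pair $i,j\in[k-l,k+l]$ (mod $L$) can have $|x_i-x_j|=L-O(l)$, so your claim that $|x_i-x_j|\le 2l$ makes the Taylor expansion uniform fails there; correspondingly, identifying the first-order piece with $-i\bra{W_1}[K,H_L]\ket{W_1}$ is no longer innocuous, because the commutator weights those pairs by the literal $x_i-x_j=O(L)$ while the actual phase depends only on the wrapped separation (since $e^{i\alpha L}=e^{2\pi i m}=1$). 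The mismatch is $\pm\alpha L=\pm 2\pi m$ per wrap-around pair, i.e.\ an uncontrolled $O(1/L)$ contribution after the $1/L$ normalization, proportional to $\mathrm{Im}\,\bra{0}X_j h_k X_i\ket{0}$ on the wrap pairs.

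This is not a removable technicality: eigenstate-ness plus Hellmann--Feynman is simply not enough on a ring. Take $H=\tfrac{1}{2i}\sum_j\bigl(c_{j+1}^\dag c_j-c_j^\dag c_{j+1}\bigr)$, i.e.\ $\varepsilon(k)=\sin k$. Then $\ket{W_1}=c_{k=0}^\dag\ket{0}$ is an exact eigenstate, $\bra{W_1}[K,H]\ket{W_1}=0$ as for any eigenstate, yet the boosted state has energy $\sin(2\pi m/L)=\Theta(m/L)$. Of course here $\ket{W_1}$ is not the \emph{ground} state --- and that is exactly the point: the corollary needs the ground-state hypothesis to kill the linear term, whereas your argument invokes only the eigenstate hypothesis. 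The paper's fix is variational: after reducing everything to wrapped separations (which are bounded by $2l$, so the expansion is uniform), the first-order coefficient is odd in $m$, so if it were nonzero then one of $m=\pm 1$ would produce a state with energy below the ground state, a contradiction. Your proof is correct as written for open chains; for periodic chains, replace $x_i-x_j$ by the wrapped distance and replace the commutator step by this sign-of-$m$ argument.
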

For $U(1)$ and translation symmetric Hamiltonians this statement becomes even stronger as it implies that there are $O(1/L^2)$ low-energy eigenstates. The easiest demonstration of this concept is given by the following free fermion Hamiltonian
\begin{align}
    H&=-\frac{1}{2}\sum_{i}\left[c_i^\dag c_{i+1}+c_{i+1}^\dag c_{i}\right]+\sum_i c_i^\dag c_i,\nonumber\\
    &=\sum_k\left(1-\cos k\right)\,c_k^\dag c_k.
    \label{eq:simpleHam}
\end{align}
The energy dispersion in momentum space is $\varepsilon(k)=1-\cos k$, as shown in Fig.~\ref{fig:simplestham}(b). Here we see that the $\ket{W_1}$ state is a zero mode with the same energy as the empty $\ket{0}$ state. Together they form the smallest ground state manifold possible for the $\ket{W_1}$ state, as necessitated in Theorem~\ref{thm:Wstate}. The low energy excitations of this Hamiltonian are indeed the momentum $k=2\pi m/L$ single particle excitations $c^\dag_k\ket{0}$ which have quadratic dispersion for small $k$, with $k_{\rm min}$ scaling as $1/L$, leading to an $O(1/L^2)$ gap.

This Hamiltonian is special as it represents a critical point, known as a Lifshitz transition, where the Fermi energy precisely touches the bottom of a quadratic dispersion. With these observations in mind we propose a new quantum state signature of a Lifshitz transition
\begin{conj}
    If a translation invariant Hamiltonian is tuned such that a zero-mode $\ket{W_1}$-like state, i.e. a state in the form of $\frac{1}{\sqrt{L}}\sum_i e^{i \frac{2\pi}{L} m x_i}X_i\ket{\alpha}^{\otimes L}$ where $m\in\mathbb{Z}$, and a product state $\ket{\alpha}^{\otimes L}$ become the ground states of the system then the system is at a Lifshitz transition.
\end{conj}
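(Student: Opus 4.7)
The approach is to reinterpret the stated degeneracy as a zero-energy touching of the single-magnon band of $H$ at momentum $k_0 := 2\pi m/L$ above the product-state vacuum $\ket{\alpha}^{\otimes L}$, and then to recognize such a touching as the defining feature of a Lifshitz transition.

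First I would use translation invariance to decompose the problem in momentum space. Define the single-magnon momentum states
\[
\ket{k} := \tfrac{1}{\sqrt{L}} \sum_i e^{i k x_i} X_i \ket{\alpha}^{\otimes L}, \qquad k \in \tfrac{2\pi}{L}\mathbb{Z}_L,
\]
which by translation invariance are automatic momentum eigenstates and (after a trivial Gram--Schmidt against $\ket{\alpha}^{\otimes L}$ when $k=0$, if needed) span the single-magnon subspace sector by sector. Set $\varepsilon(k) := \bra{k} H \ket{k} - E_{\rm gs}$. The variational principle, applied using that $\ket{\alpha}^{\otimes L}$ is a ground state with energy $E_{\rm gs}$, gives $\varepsilon(k) \geq 0$, and by hypothesis $\varepsilon(k_0) = 0$, so $k_0$ is a minimum of the band.

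Next I would apply Corollary~\ref{cor:momentumguys} to the ground state $\ket{k_0}$: each boosted partner is precisely $\ket{k_0 + 2\pi n/L}$ above, and lies within $O(1/L^2)$ of $E_{\rm gs}$ in expectation, so $\varepsilon(k_0 + 2\pi n/L) = O(1/L^2)$ for every fixed $n$. Translation invariance ensures that $\varepsilon(k)$ tends to a smooth function of $k$ in the thermodynamic limit, so this scaling forces the leading behaviour near $k_0$ to be quadratic, $\varepsilon(k) \simeq \frac{1}{2m^*}(k - k_0)^2$ (a higher-order zero would correspond to a higher-order Lifshitz scenario; the argument is otherwise unchanged). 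Now an infinitesimal shift $\varepsilon(k) \to \varepsilon(k) + \delta\mu$ either lifts the whole band, leaving $\ket{\alpha}^{\otimes L}$ as the unique gapped ground state (insulating side), or drops a small neighbourhood of $k_0$ below zero, producing a Fermi pocket of $W$-magnons that must be occupied in the true many-body ground state (metallic side). The finely tuned point in between, where a quadratic band just kisses the Fermi energy at an isolated momentum, is the Lifshitz transition of Fig.~\ref{fig:simplestham}(b).

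The main obstacle, and the reason the statement is posed as a conjecture rather than a theorem, is promoting this rigorous single-magnon analysis to the full interacting many-body spectrum. One needs to show that the multi-magnon states $\ket{W_n}$ and their boosts assemble into low-energy eigenstates of $H$ that genuinely mimic the filling of the quadratic band, with interaction corrections uniformly suppressed in $1/L$. A natural route is to iterate the logic of Proposition 1 together with a cluster-expansion bound controlling the energetic independence of well-separated magnons. Without such control one cannot exclude strongly correlated effects, such as magnon--magnon binding or a band that fails to be smooth in the thermodynamic limit, that could in principle spoil the identification with a standard Lifshitz transition.
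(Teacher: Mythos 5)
This statement is posed in the paper as a \emph{conjecture}: the authors offer no proof, only the supporting observation (via Corollary~\ref{cor:momentumguys} and the free-fermion example of Eq.~\eqref{eq:simpleHam}) that the $O(1/L^2)$ energy of the boosted $W$ states matches the finite-size spectrum of a quadratic band touching the Fermi level. Your proposal is, in essence, a fleshed-out version of exactly that heuristic — the single-magnon band picture, the variational lower bound from the product ground state, and the chemical-potential bifurcation into insulator versus metal — so you have correctly reconstructed the paper's motivating argument. You also correctly identify the central obstruction (controlling the interacting many-body spectrum on the metallic side), which is presumably why the authors did not attempt a proof.

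That said, since you present this as a proof plan, be aware of gaps beyond the one you name. First, the conjecture assumes only translation invariance, not $U(1)$ symmetry; without a conserved magnon number the states $\ket{k}$ are not eigenstates, the ``single-magnon subspace'' is not $H$-invariant, and $\varepsilon(k)$ is merely the expectation value of a family of trial states. The paper itself flags this distinction right after Corollary~\ref{cor:momentumguys} (``For $U(1)$ and translation symmetric Hamiltonians this statement becomes even stronger\ldots''). Your Fermi-pocket picture tacitly presupposes a particle number to fill. Second, Corollary~\ref{cor:momentumguys} gives only an \emph{upper} bound $\varepsilon(k_0+2\pi n/L)=O(1/L^2)$ on expectation values; it does not by itself establish that the limit $\varepsilon(k)$ exists, is smooth, or vanishes quadratically rather than, say, being identically zero on a whole interval (a flat band), which would not be a Lifshitz point. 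Third, ``the system is at a Lifshitz transition'' has no precise definition for a general interacting, non-$U(1)$-symmetric Hamiltonian, so even granting your analysis there is no sharply stated target to prove — this definitional issue is arguably the first thing any genuine proof attempt would have to resolve. None of this makes your sketch wrong as physics; it just confirms that the statement remains a conjecture, and your plan should be read as a refinement of the paper's evidence rather than a route to a theorem.
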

This signature would cover the simple types of Lifshitz transitions such as when a state transitions from an insulator to metal or vice versa, but does not encapsulate other instances such as when a metal or semimetal changes its Fermi surface shape via a Van-Hove singularities or Dirac lines~\cite{voloviklifshitz}. We suspect these more complicated transition may also have an interpretation in terms of the $\ket{W_1}$-like zero mode degeneracy, however the precise statement remains to be formulated. Our conjecture is in alignment with the fact that at these critical points a quadratic (or higher) dispersion necessarily occurs.

\subsection{Gapped Hamiltonians}

Some simple consequences also follow for gapped Hamiltonians for which $\ket{W_1}$ is a ground state. Here we present the most interesting result with more fun facts in the Supplementary materials~\cite{supp}.

A key question is whether the $W$ state can belong to the ground state manifold of a stable gapped phase? Stability of the ground state degeneracy is essential for defining such phases. Generally one desires that the ground state degeneracy is exponentially stable in system size, i.e. a perturbation of magnitude $\lambda\ll1$ creates an exponentially small energy splitting of the degeneracy (such as $O(\lambda^L)$), as is the case for topological orders and fractons~\cite{stabilityHastings,quasiadiabatic,PhysRevB.83.035107}. However the $\ket{W_1}$ degeneracy with $\ket{0}$ is never stable in this sense:
\begin{cor}
The $\ket{W_1}$ state is never a part of a stable gapped ground state manifold, i.e. there exists a perturbation of magnitude $\lambda$ that can lift the ground state degeneracy by an energy proportional to $\lambda$.
\label{cor:degeneracy}
\end{cor}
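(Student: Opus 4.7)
The plan is to exhibit a concrete local perturbation that splits the $\ket{W_1}$--$\ket{0}$ degeneracy to first order in $\lambda$, uniformly in $L$.

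The starting point is Theorem~\ref{thm:Wstate}: any local Hamiltonian with ground state $\ket{W_1}$ must also have $\ket{0}$ in its ground state manifold. These two states are globally distinguishable by the total particle number $N = \sum_i n_i$, with $n_i = (1 - Z_i)/2$, since $N\ket{0} = 0$ while $N\ket{W_1} = \ket{W_1}$. This observation is the key: the degeneracy-protecting structure is not a subtle, locally-indistinguishable topological label (as it would be for toric code ground states), but rather a global quantum number that can be coupled to by a trivial local operator.

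I would then take as the perturbation $V = \lambda \sum_i n_i$, which is a sum of on-site bounded terms of norm $\lambda$, hence a legitimate local perturbation of magnitude $\lambda$ (each $\lambda n_k$ can be absorbed into the local term $h_k$). In the ground state subspace the matrix elements are $\bra{0} V \ket{0} = 0$ and $\bra{W_1} V \ket{W_1} = \lambda$, both exact in $L$. Because $V$ commutes with $N$ and the two ground states sit in distinct $N$-eigensectors, the perturbation matrix restricted to $\mathrm{span}\{\ket{0}, \ket{W_1}\}$ is already diagonal, so first-order degenerate perturbation theory gives the two perturbed energies as $E_{\rm GS}$ and $E_{\rm GS} + \lambda + O(\lambda^2)$. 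For $\lambda$ small compared with the assumed gap above the ground state manifold, analytic perturbation theory guarantees this first-order result faithfully reflects the actual low-lying spectrum, so the splitting is genuinely $\Theta(\lambda)$, uniformly in $L$ — vastly larger than the $O(\lambda^L)$ splitting required of a stable gapped phase.

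The main obstacle is the mild technical issue of degenerate perturbation theory when the original ground state manifold contains states beyond $\ket{0}$ and $\ket{W_1}$: one would in principle want to check that $V$ does not rearrange an enlarged manifold in some way that re-protects the degeneracy. For the statement of the corollary this is not really an obstacle, however, since it is enough to exhibit two states in the original manifold whose perturbed energies differ by $O(\lambda)$, and the matrix element computation above does exactly that; any additional structure in the manifold can only make the situation worse for stability, not better.
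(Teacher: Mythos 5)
Your proposal is correct and takes essentially the same route as the paper: invoke Theorem~\ref{thm:Wstate} to place $\ket{0}$ in the ground state manifold, then add a chemical-potential perturbation (your $\lambda\sum_i n_i$ equals the paper's $-\lambda\sum_i Z_i$ up to an additive constant and a factor of two) whose expectation values split $\ket{0}$ and $\ket{W_1}$ by $\Theta(\lambda)$. Your extra remarks on degenerate perturbation theory and the block-diagonal structure in particle-number sectors are a slightly more careful justification of the same computation.
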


\begin{proof}
 We have previously shown that $\ket{0}$ must be in the ground state manifold if $\ket{W_1}$ is a ground state. Since this is the case, we may simply add a perturbation $\delta H=-\lambda\sum_i Z_i$ to create an energy gap of $\bra{W_1}H_0+\delta H\ket{W_1}-\bra{0}H_0+\delta H\ket{0}=2\lambda$.
\end{proof}
It follows that these states should be excluded from numerical searches of gapped ground state phases, despite their area law entanglement entropy.

%%%%%%%%%%%%%%%%%%%%%%%%%%%%%%%%%%%%%%%%%%%%%%%%%%%%%%%%%%%%%%%%%%%%%%%%%%
\section{Pathological Parent Hamiltonians}

In this section we demonstrate the limits of Theorem~\ref{thm:Wstate} and its assumptions by constructing converse Hamiltonians that possess the $\ket{W_1}$ state as the unique ground state.

\subsection{Explicit length dependence of parameters}

One key assumption Theorem~\ref{thm:Wstate} makes is that as one takes the thermodynamic limit of the Hamiltonian the addition of new terms does not change the original terms, i.e. no intrinsic $L$-dependence of the individual Hamiltonian parameters. If we break this assumption we can arrive at a Hamiltonian for which the $\ket{W_1}$ is the unique ground state. To do this, we simply modify the critical Lifshitz-transition Hamiltonian in Eq.~\ref{eq:simpleHam} by shifting the chemical potential in $O(1/L^2)$ to create an ever-shrinking Fermi surface, as depicted in Fig.~\ref{fig:simplestham}(c). Such a situation is given by Hamiltonian:
\begin{align}
    H&=-\frac{1}{2}\sum_{i}\left[c_i^\dag c_{i+1}+c_{i+1}^\dag c_{i}\right]+\cos\left(\frac{\pi}{L}\right)\sum_i c_i^\dag c_i\nonumber\\
    &=\sum_k\left(\cos\left(\frac{\pi}{L}\right)-\cos k\right)\,c_k^\dag c_k
    \label{eq:shrinkingFS}
\end{align}
where $\ket{0}$ is an energy $O(1/L^2)$ excitation above $\ket{W_1}$. Here we see that the chemical potential parameter explicitly depends on $L$. We see that this sort of dependence is generally unphysical as it defies the condensed-matter notion of locality since the knowledge of the system size is encoded in non-local operators spanning $O(L)$ sites.

By further breaking the bounded operator norm condition of individual terms one may create an even `sicker' \textit{gapped} Hamiltonian with $\ket{W_1}$ as the unique ground state by explicitly multiplying the terms in Eq.~\ref{eq:shrinkingFS} by $L^2$ such that the $O(1/L^2)$ excitation becomes an $O(1)$ excitation. Hamiltonians with system length dependent parameters, of a similar spirit to Eq.~\ref{eq:shrinkingFS}, for which $\ket{W_1}$ is the unique ground state appear in Refs.~\cite{PhysRevA.99.062104,apel2023simulating,PhysRevA.86.022339,PhysRevA.72.014301}.

\subsection{Locality-breaking}

Naturally if we break locality by allowing terms that couple to sites that are $O(L)$ apart, then we may find Hamiltonians for which $\ket{W_1}$ is the unique, and in fact gapped, ground state. Here we present two notable examples of such a phenomenon.

The first example of this principle is given by a modification of the critical Lifshitz model in Eq.~\ref{eq:simpleHam} with a total charge projector. This procedure results in the non-local gapless Hamiltonian
\begin{align}
    H=\lambda\prod_i \left(1-2n_i\right)-\frac{1}{2}\sum_{i}\left[c_i^\dag c_{i+1}+c_{i+1}^\dag c_{i}\right]+\sum_i c_i^\dag c_i,
\end{align}
where $\lambda>0$ which projects states in the odd charge sector (such as $\ket{W_1}$) to a lower energy state by $2\lambda$ as compared to even charge sectors (such as $\ket{0}$). This lifts the degeneracy of the ground state while maintaining the gapless spectrum at the cost of the non-local charge projector.

The second example is an all-to-all 2-body Hamiltonian with unbounded norm, as presented in Ref.~\cite{frowis2013certifiability},
\begin{align}
    H&=\left(1-\sum_i \frac{1}{2}(1-Z_i)\right)^2-J^2\quad,
    \label{eq:nonlocalH}
\end{align}
where $J^2=(\sum_i X_i)^2+(\sum_i Y_i)^2+(\sum_i Z_i)^2$ for which the Dicke states~\cite{Dicke} $\ket{L,m}$ are the eigenstates for $J^2$ and $J_z$ with eigenvalues $L/2 (L/2+1)$ and $L/2-m$, respectively. Here $\ket{W_1}=\ket{L,1}$ is the gapped lowest energy ground state of Eq.~\ref{eq:nonlocalH} since the first term favours states with total charge one, and the second term lifts the degeneracy between $\ket{W_1}$ and its momentum boosted states. The unbounded norm of operators renders it difficult to define a gap in the usual condensed-matter sense and is similarly pathological to the case in Eq.~\ref{eq:shrinkingFS} when the terms are multiplied by $L^2$.

Recently, a more mild notion of non-local Hamiltonian has received attention, due to its connection with low-density parity-check (LDPC) codes \cite{dinur2022good,Anshu_2023,leverrier2022quantum}. We can define a low-density Hamiltonian to be one whose terms each involve a bounded number of sites, and for which each site participates in a bounded number of terms. The latter condition rules out the Hamiltonian \eqref{eq:nonlocalH}. We could also soften this condition by requiring only that the sum of the norms is bounded in each case. We do not know at this time whether the $W$ state is a unique ground state of such a Hamiltonian (or whether the $W$ state is an LDPC code).

Low-density Hamiltonians can prepare more kinds of states than local Hamiltonian systems. For example, our method of argument also shows that $\frac{1}{\sqrt{2}} (X_1+X_{L/2}) \ket{0}$ is not the unique ground state of any local Hamiltonian system. However, it is the unique (even gapped!) ground state of the low-density Hamiltonian $H = 2Z_1 Z_{L/2} -\sum_i Z_i$.

%%%%%%%%%%%%%%%%%%%%%%%%%%%%%%%%%%%%%%%%%%%%%%%%%%%%%%%%%%%%%%%%%%%%%%%%%%
\section{Discussion}

The relationship between $\ket{0}$ and the $W$ state can be generalized. Given a state $\ket{\psi}$, we can consider $\ket{\psi'} = \sum_i \cO_i \ket{\psi}$. We propose to call these \emph{wh\textoverline{a}nau-states}\footnote{Wh\textoverline{a}nau, pronounced [\textprimstress fa\textlengthmark na\textbaru], is a Maori word for family.}. In higher dimensions, we can also consider applying operators on subspaces, such as $\frac{1}{\sqrt{L_x}}\sum_{i=1}^{L_x} \prod_{j = 1}^{L_y} X_j \ket{0}$ in a 2d square lattice. This can be extended to generate a whole ``family tree'' of $\ket{\psi}$, and we expect analogous ``groundstateability'' results along this whole tree, such that a state can only be a ground state if its ancestors are as well. We sketch an argument for this in the supplemental material.

An interesting property concerning $\ket{0}$ and $\ket{W}$ is that for all regions $R$, the reduced density matrices $\rho_R^0$ and $\rho_R^W$ converge in the $L \to \infty$ limit. \footnote{This convergence is not uniform in $R$: if we set $|R| = L/2$, the entanglement entropy of $\ket{W}$ is $\log 2$.}. In particular, the expectation value of any fixed operator in $\ket{0}$ and $\ket{W}$ will converge as $L \to \infty$. 

Although from this point of view, the $W$ state is indistinguishable from a product state in the thermodynamic limit, it is still long-range entangled (this can be proven by considering the boosted $W$ states, which have nonzero momentum, and then applying the results of \cite{GioiaWang}). This presents a challenge for the formal understanding of states in the thermodynamic limit as functions on the algebra of local observables \cite{bratteli2012operator,ruelle2001topics}, since these states are identified, while being physically distinct. What is needed, it seems, is a theory of the thermodynamic limit which can also keep track of finite-size corrections.

\begin{acknowledgments}
The authors especially thank Omar Abdelghani, Xie Chen, Nick G. Jones, Ruben Verresen, and  Chong Wang for very useful insight and suggestions. LG is grateful to Leonardo A. Lessa, Sanjay Moudgalya, and Sergey Syzranov for related discussions. We are grateful to the Perimeter Institute for Theoretical Physics and Institut des Hautes \'Etudes Scientifiques for hosting us during part of this work. LG also thanks the graduate fellowship program at the Kavli Institute for Theoretical Physics.
\end{acknowledgments}

\bibliography{references}

\onecolumngrid
\appendix

\section{Proof of Proposition 1}

For convenience, we normalize $H$ so that $\langle 0 | h_k |0 \rangle = 0$ for all $k$ (this can be achieved by normal ordering $h_k$ and then subtracting off the constant piece). Let us also adopt the shorthand $\langle \cO \rangle$ for expectation values in the state $\ket{0}$.

Our quantity of interest is
\[\Delta_n = -{L \choose n}^{-1} \sum_{i_1 < \cdots < i_n} \sum_{j_1 < \cdots < j_n} \sum_k \langle X_{i_1} \cdots X_{i_n} h_k X_{j_1} \cdots X_{j_n}\rangle.\]
Since $\ket{0}$ is a product state, where these operators are not overlapping we can split the expectation value into a product of expectation values. In particular, $\langle X_i \rangle = 0$, so we only get nonzero contributions from where each $X$ either collides with another $X$, giving $X^2 = 1$, or with $h_k$ (compare \eqref{eqnXvanishing}). The sum over $j$'s can then be replaced with a sum with $|i_m - j_m| \le 2l+1$, where recall $l$ is the range of the Hamiltonian terms, defined such that $h_k$ acts in the window $[k-l,k+l]$. Further, by our normalization $\langle h_k \rangle = 0$, so together with the restriction on the range of the $j$'s, we find each term in the sum for fixed $i_1 < \cdots < i_n$ is finite, and in fact, because of the prefactor, $O(1/L^n)$.

Consider the subsum of the $i$'s where at least one $i$ is within a distance $4l+2$ from another $i$. There are $O(L^{n-1})$ terms in this sum, and since each is $O(1/L^n)$ (using the bounded norm assumption on $h_k$), this part of the sum is $O(1/L)$ and in particular goes to zero as $L \to \infty$. Neglecting these terms allows us to always split the expectation value in $\Delta_n$ as a product of terms evaluated near each $i_m$. We can write it as
\[\Delta_n = -{L \choose n}^{-1} \sum_{\substack{i_1 < \cdots < i_n \\ i_{m+1} - i_m > 4l+2}} E(i_1,\ldots,i_n) + O(1/L)\]
where
\[E(i_1,\ldots,i_n) := \sum_{j_1 = i_1 - 2l-1}^{i_1 + 2l+1} \cdots \sum_{j_n = i_n - 2l-1}^{i_n + 2l+1} \sum_k \langle X_{i_1} \cdots X_{i_n} h_k X_{j_1} \cdots X_{j_n}\rangle \\ 
= \sum_{j_1 = i_1 - 2l-1}^{i_1 + 2l+1} \cdots \sum_{j_n = i_n - 2l-1}^{i_n + 2l+1} \Bigg(\sum_{k = i_1-2l-1}^{i_1+2l+1} \langle X_{i_1} h_k X_{j_1} \rangle \langle X_{i_2} X_{j_2} \rangle \cdots \langle X_{i_n} X_{j_n} \rangle\\ + \sum_{k = i_2-2l-1}^{i_2+2l+1} \langle X_{i_1} X_{j_1}\rangle \langle X_{i_2} h_k X_{j_2} \rangle \langle X_{i_3} X_{j_3} \rangle \cdots \langle X_{i_n} X_{j_n} \rangle + \cdots \\
+ \sum_{k = i_n-2l-1}^{i_n+2l+1} \langle X_{i_1} X_{j_1} \rangle \cdots \langle X_{i_{n-1}} X_{j_{n-1}} \rangle \langle X_{i_n} h_k X_{j_n} \rangle\Bigg) \\
 = \sum_{m = 1}^n \sum_{j = i_m - 2l-1}^{i_m + 2l+1} \sum_{k = i_m - 2l-1}^{i_m + 2l+1} \langle X_{i_m} h_k X_j \rangle \\
 = \sum_{m = 1}^n E(i_m).\]
Here we see the energy difference receiving a local contribution from each flipped spin individually, when they are well-separated.

Now we have
\[\Delta_n = -{L \choose n}^{-1} \sum_{\substack{i_1 < \cdots < i_n \\ i_{p+1} - i_p > 4l+2}} \sum_{m = 1}^{m = n} E(i_m) + O(1/L) \\ 
= -{L \choose n}^{-1} \sum_{m = 1}^{m = n} \sum_{\substack{i_1 < \cdots < i_n \\ i_{p+1} - i_p > 4l+2}} E(i_m) + O(1/L)
\\ = -{L \choose n}^{-1} \sum_{m = 1}^{m = n} \frac{1}{n!} \sum_{\substack{i_1, \ldots, i_n \\ |i_p - i_q| > 4l+2}} E(i_m) + O(1/L) \\ 
= -{L \choose n}^{-1} \frac{n}{n!} \sum_{i_1} E(i_1) \sum_{\substack{i_2, \ldots, i_n \\ |i_p - i_q| > 4l+2}} 1 + O(1/L).\]

(In the third line, we removed the ordering on the $i$'s.) Note the sum 
\[S = \sum_{\substack{i_2, \ldots, i_n \\ |i_p - i_q| > 4l+2}} 1\]
is independent of $i_1$. Then, using
\[\Delta_1 = -\frac{1}{L} \sum_i E(i)\]
we can do the sum over $i_1$ to obtain
\[\Delta_n = L n \Delta_1 {L \choose n}^{-1} \frac{1}{n!} S + O(1/L).\]
We can compute
\[S = \sum_{\substack{i_2, \ldots, i_n \\ |i_p - i_q| > 4l+2 \\ i_p > 4l+2}} 1 = (L-4l-2)(L-8l-4) \cdots (L-(n-1)(4l+2)) = L^{n-1} + O(L^{n-2}).\]
Note also,
\[{L \choose n}^{-1} \frac{1}{n!} = \frac{1}{L^n + O(L^{n-1})} = \frac{1}{L^n} + O(1/L^{n+1}).\]
Thus we have the result
\[\Delta_n = n \Delta_1 + O(1/L).\]

\section{Sketch of a generalization}

These arguments work in a slightly more general setting, which we will sketch. Let $\ket{0}$ be some fixed state in a one-dimensional $N$-site Hilbert space. We assume there is a length $\xi$ such that $\ket{0}$ obeys exponential cluster decomposition with length $\xi$, meaning for all local operators $\cO_i$, $\cO_j'$, there is a constant $K$ such that
\[|\langle \cO_i \cO_j' \rangle - \langle \cO_i \rangle \langle \cO_j' \rangle| < K e^{-|i-j|/\xi},\]
where we use the shorthand $\langle - \rangle$ for expectation values in $\ket{0}$.

Let $S_i$ be a local operator which behaves like a $U(1)$ charged operator, in particular, that the expectation value of a product of $S_i$'s and $S_j^\dagger$'s is zero unless there are an equal number of both. We call this the charge-balance property. It may be possible to relax this assumption, but the arguments don't seem to easily generalize to this case. (Note that the Hamiltonian is not assumed to be symmetric in any way.)

We consider the states
\[|W_n\rangle := {L \choose n}^{-1/2} C^{-1/2} \sum_{i_1 < \cdots < i_n} S_{i_1} \cdots S_{i_n} \ket{0},\]
where $C > 0$ is a normalization factor, defined as
\[C = {L \choose n}^{-1} \sum_{i_1 < \cdots < i_n} \sum_{j_1 < \cdots < j_n} \langle S^\dagger_{i_1} \cdots S^\dagger_{i_n} S_{j_1} \cdots S_{j_n} \rangle.\]
In this expression, as well as in the energy
\[\Delta_n = -{L \choose n}^{-1} C^{-1} \sum_{i_1 < \cdots < i_n} \sum_{j_1 < \cdots < j_n} \sum_k \langle S_{i_1}^\dagger \cdots S_{i_n}^\dagger h_k S_{j_1} \cdots S_{j_n}\rangle,\]
for fixed $i_1 < \cdots < i_n$, the sum over the $j$'s and $k$ is finite. When other operator insertions are far from these, by cluster decomposition, the expectation value becomes exponentially close to the product. However, by the charge balance property, these expectation values are exponentially small unless each $S_i^\dagger$ is near either an $S_j$ or $h_k$, and vice versa for each $S_j$. The summand is thus exponentially small unless $S_{i_m}^\dagger$ is near $S_{j_m}$ for each $m$. So we can restrict the sum to those $j_m$ which are within some fixed $l'$ of $i_m$, while incurring an error exponentially small in $l'$ and not growing with $N$. By taking $l'$ large enough we can thus make these errors arbitrarily small. The rest of the argument then proceeds as in the proof of proposition 1. Likewise we can consider the matrix elements as in proposition 2

The product state $\ket{0}$ we considered above satisfies these properties, with $\xi = 1$, $K = \|\cO_i\| \|\cO_j'\|$, $S_i = S_i^+ = (X_i + i Y_i)/2$, and even enjoys the $U(1)$ symmetry generated by $\sum_i Z_i$, under which $S_i$ carries charge 1. The states $\ket{n}$ above are the same ones we studied previously. There is a simple generalization to spin-$S$, with $S_i = (S_i^+)^m$, for which $\ket{n}$ are the so-called Dicke states \cite{Dicke}.

\section{Proof of Corollary~\ref{cor:momentumguys}}
\label{app:proofofcor}

\begin{proof}
    Let $H=\sum_i h_i$ be the local Hamiltonian (as defined previously) already in the normal form, i.e. $h_i=:h_i:$, with range $l$. Assume the $\ket{W_1}$ state is a ground state, and define momentum boost states $\ket{W_1^{(m)}}\equiv U_m\ket{W}$ with
    \[U_m\equiv \exp\left(i\frac{2\pi}{L}m\sum_{i}x_i\hat{n}_i\right)\]
    $m\in\mathbb{Z}$ fixed for all $L$. These states are eigenstates of the translation operator $T$
    \begin{align}
        T\ket{W_1^{(m)}}=e^{-i\frac{2\pi}{L}m}\ket{W_1^{(m)}}\quad,
        \label{eq:Teigenstate}
    \end{align}
    due to the relation $TU_mT^\dag=e^{-i\frac{2\pi\hat{N}}{L}m}U_m$.
    We are interested in the energy difference between $\ket{W_1^{(m)}}$ and $\ket{W_1}$:
    \begin{align}
        \Delta_W^{(m)}=\langle W_1^{(m)}|H\ket{W_1^{(m)}}-\langle W_1|H\ket{W_1}\quad,
    \end{align}
    which we will show behaves as $O(1/L^2)$ as $L \to \infty$. We may expand 
    \begin{align}
        \langle W_1^{(m)}|H\ket{W_1^{(m)}}& = \sum_i\langle W_1^{(m)}|h_i\ket{W_1^{(m)}},\nonumber\\
        &=\sum_i\langle W_1^{(m)}|T^{i-l}T^{-i+l}h_iT^{i-l}T^{-i+l}\ket{W_1^{(m)}},\nonumber\\
        &=\sum_i\langle W_1^{(m)}|T^{-i+l}h_iT^{i-l}\ket{W_1^{(m)}},\nonumber\\
        &=\sum_i\langle W_1^{(m)}|\tilde{h}_l^{(i)}\ket{W_1^{(m)}},
        \label{eq:kman}
    \end{align}
    where we have used Eq.~\ref{eq:Teigenstate} to go from line two to three, and have defined $\tilde{h}_l^{(i)}\equiv T^{-i+l}h_iT^{i-l}$ which is an operator centered at the $l$ affecting at most qubits sites $l$ away, i.e. the range $[0,2l]$. With this observation in mind, let us expand the equation for a specific $\tilde{h}_l^{(i)}$
    \begin{align}
        \langle W_1^{(m)}|h_l^{(i)}\ket{W_1^{(m)}}&=\langle W_1|e^{-i\frac{2\pi}{L}m\sum_{j}x_j\hat{n}_j}\tilde{h}_l^{(i)}e^{i\frac{2\pi}{L}m\sum_{k}x_k\hat{n}_k}\ket{W_1},\nonumber\\
        &=\langle W_1|\tilde{h}_l^{(i)}\ket{W_1}+i\frac{2\pi m}{L}\langle W_1|\left[\tilde{h}_l^{(i)},\sum_{j=0}^{2l}x_j\hat{n}_j\right]\ket{W_1}+\left(\frac{2\pi m}{L}\right)^2\langle W_1|\hat{h}^{(i)}_l\ket{W_1}+...\nonumber\\
        &=\langle W_1|\tilde{h}_l^{(i)}\ket{W_1}+\left(\frac{2\pi m}{L}\right)^2\langle W_1|\hat{h}^{(i)}_l\ket{W_1}+...\nonumber\\
        &=\langle W_1|\tilde{h}_l^{(i)}\ket{W_1}+O(1/L^3)\quad,
        \label{eq:manipkexcite}
    \end{align}
    where
    \begin{align}
        &\hat{h}^{(i)}_l\equiv\sum_{j,k=0}^{2l}x_j x_k\hat{n}_j \tilde{h}_l^{(i)}\hat{n}_k +\frac{1}{2}\left[\left(\sum_{j=0}^{2l}x_j\hat{n}_j \right)^2,\tilde{h}_l^{(i)}\right]\quad,
    \end{align}
    From line one to two in Eq.~\ref{eq:manipkexcite} we expanded the exponential in $1/L$ since $\sum_{j=0}^{2l}x_j\hat{n}_j\sim O(1)$ and terms outside of range $l$ do not contribute since $\tilde{h}_l^{(i)}$ is normal ordered and $W_1$ has no overlap between regions far away (using the same logic as for Proposition 2). From line two to three we use the fact that $\frac{2\pi m}{L}$ term must vanish since otherwise $\ket{W}$ would not be lowest ground state as we could choose either positive or negative $m$ to create a lower energy state~\cite{harukibloch}. The last line is derived by seeing that expectation value of normal ordered local operator $\langle W_1|\hat{h}^{(i)}_l\ket{W_1}$ can at most contribute $O(1/L)$ since terms where the $1$ is far-away from the range cannot contribute.

    Putting all the pieces together from Eqs.~\ref{eq:kman} and \ref{eq:manipkexcite} we can deduce that
    \begin{align}
        \Delta_W^{(m)}=O(1/L^2)\quad,\nonumber
    \end{align}
    since each $\tilde{h}_l^{(i)}$ contributes $O(1/L^3)$ and $\hat{h}_l^{(i)}$ is bounded so the summation over $i$ cannot scale as $O(L)$.

\end{proof}

\section{Fun facts}

In this section we present some fun facts and example Hamiltonians for readers to gain further intuition of the $\ket{W_1}$ type states.

\begin{cor}
Any state $\ket{\psi}$ related to the $\ket{W_1}$ state via a local unitary, or equivalently finite-depth quantum circuit (FDQC), that does not depend on system length, i.e. an adiabatic evolution by a Hamiltonian whose coefficients do not explicitly depend on length, also cannot be the unique ground state of a local Hamiltonian.

\end{cor}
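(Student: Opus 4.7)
The plan is to reduce the corollary directly to Theorem~\ref{thm:Wstate} by conjugating the would-be parent Hamiltonian by the finite-depth circuit. Suppose, towards contradiction, that $\ket{\psi} = U \ket{W_1}$ is the unique ground state of a finite-range 1d Hamiltonian system $\{H_L\}$, where $U$ is a finite-depth quantum circuit of depth $D$ built from gates of range $r$, both independent of $L$. Define $H'_L := U^\dagger H_L U = \sum_k U^\dagger h_k U$. Then $\ket{W_1} = U^\dagger \ket{\psi}$ is the unique ground state of $H'_L$ with the same spectrum.

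The key observation is that $\{H'_L\}$ is itself a finite-range 1d Hamiltonian system, with a new range $l'$ bounded by $l + Dr$ (each layer of $U$ spreads the support of $h_k$ by at most $r$ sites on each side, and there are $D$ layers). Crucially, since $D$ and $r$ do not depend on $L$, neither does $l'$, and the operator norms $\|U^\dagger h_k U\| = \|h_k\|$ remain uniformly bounded. So the hypotheses of Theorem~\ref{thm:Wstate} are satisfied for $\{H'_L\}$.

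Applying Theorem~\ref{thm:Wstate}, $\ket{0}$ must be an exactly degenerate ground state of $H'_L$. Conjugating back, $U\ket{0}$ is an exactly degenerate ground state of $H_L$ distinct from $\ket{\psi}$ (they are distinct because $\ket{0} \neq \ket{W_1}$ and $U$ is unitary). This contradicts the uniqueness of $\ket{\psi}$.

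The main obstacle is the bookkeeping in the first step: verifying that $U^\dagger H_L U$ really fits the paper's definition of a finite-range 1d Hamiltonian system, i.e., that the conjugated local terms can be packaged as a sequence $h'_k$ of bounded-norm operators of some fixed range $l'$ acting on windows of the chain, independent of $L$. This is essentially automatic for a strict FDQC, but one should be careful about boundary terms in the open or periodic case and about the claim's extension to adiabatic evolutions by bounded, $L$-independent local Hamiltonians, which require Lieb--Robinson type arguments to approximate the evolution by a strictly finite-range circuit up to errors that decay faster than any inverse polynomial in $L$. Such errors are harmless because Theorem~\ref{thm:Wstate} gives an \emph{exact} degeneracy for the true ground state, while quasi-local tails only shift $\Delta_n$ by amounts vanishing with $L$.
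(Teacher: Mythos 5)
Your proof is correct and follows essentially the same route as the paper: conjugate the putative parent Hamiltonian by the finite-depth circuit, observe that the result is still a finite-range Hamiltonian system with $\ket{W_1}$ as its unique ground state, and invoke Theorem~\ref{thm:Wstate} for the contradiction. You are somewhat more careful than the paper's two-line argument in tracking the enlarged range $l' \le l + Dr$ and in flagging the Lieb--Robinson subtlety for genuine adiabatic evolutions, but the underlying idea is identical.
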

\begin{proof}
    Let $\ket{\tilde{W}_1}=U_{\mathrm{FD}}\ket{W_1}$ where $U_{\mathrm{FD}}$ is a finite-depth quantum circuit. If $\ket{\tilde{W}_1}$ is the unique ground state of local Hamiltonian $\tilde{H}$, then $\ket{W_1}$ would be the unique ground state of local Hamiltonian $H=U_{\mathrm{FD}}\tilde{H}U_{\mathrm{FD}}^\dag$ since unitaries do not change the spectrum of the Hamiltonian. However since this cannot be true it follows that $\ket{\tilde{W}_1}$ cannot be the unique ground state of a local Hamiltonian.
\end{proof}
However this does not immediately include momentum boosted states $e^{i\frac{2\pi m}{L}\sum_{x}\hat{n}_xx} \ket{W_1}$ since the FDQC $e^{i\frac{2\pi m}{L}\sum_{x}\hat{n}_xx}$ has length dependent parameters when $m\in O(1)$. Here the statement may be modified to only hold if the parent Hamiltonian of the boosted state is related to a local Hamiltonian with length-independent parameters after the FDQC is applied - this immediately covers $U(1)$ and translation symmetric Hamiltonians that are related via large gauge transformations to a local Hamiltonian (length-independent) where $\ket{W_1}$ is the ground state such as in Eq.~\ref{eq:simpleHam}.

Following on the same vein as in Corollary~\ref{cor:momentumguys} we show the following statement
\begin{cor}
If $\ket{W_1}$ is a ground state of a local gapped $U(1)$ and translation invariant Hamiltonian then all momentum boosted states of $\ket{W_1}$, i.e. $e^{i\frac{2\pi m}{L}\sum_{x}\hat{n}_xx} \ket{W_1}$ $\forall m\in \mathbb{Z}$, are also ground states. Moreover, this automatically implies that the ground state degeneracy must be extensive with system size.
\end{cor}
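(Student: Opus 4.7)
The plan is to combine Corollary~\ref{cor:momentumguys} with the gap assumption and the $U(1) \times$ translation symmetry decomposition of the Hilbert space. My first observation is that each boosted state $\ket{W_1^{(m)}}$ lies in a distinct symmetry sector: it carries total charge $1$, and using the identity $T U_m T^\dag = U_m e^{-i 2\pi m \hat{N}/L}$ it is a translation eigenstate with eigenvalue $e^{-i 2\pi m/L}$. Distinct integers $m$ therefore label pairwise orthogonal symmetry sectors $\mathcal{H}_{1,m}$, each of which is preserved by $H$.

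Next I would extend the computation in the proof of Corollary~\ref{cor:momentumguys} to keep the $m$-dependence explicit rather than treating $m$ as a fixed constant. The linear-in-$m/L$ piece vanishes by the same ``pick $\pm m$'' ground-state argument used there, and the remaining terms produce a bound of the form $\bra{W_1^{(m)}} H \ket{W_1^{(m)}} - E_0 \le C\, m^2/L^2$, with $C$ depending only on the local range and operator norms of $H$. Since the sector minimum $E_{\min}^{(1,m)}$ is bounded above by this variational expectation value, one obtains $E_{\min}^{(1,m)} - E_0 \le C\, m^2/L^2$.

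The gap assumption then closes the argument: by gappedness, $E_{\min}^{(1,m)} - E_0$ is either exactly $0$ or at least $\Delta$, so for every integer $m$ with $|m| < L\sqrt{\Delta/C}$ the previous bound forces $E_{\min}^{(1,m)} = E_0$ exactly. Each such sector therefore contains a genuine ground state of $H$. Since the number of admissible $m$ is $2\lfloor L\sqrt{\Delta/C}\rfloor+1 = O(L)$ and the associated symmetry sectors are orthogonal, the ground-state degeneracy is at least $O(L)$, which is the claimed extensive degeneracy. That the boosted states themselves are ground states in the thermodynamic limit follows from decomposing $\ket{W_1^{(m)}}$ in the $H$-eigenbasis within its sector and using the gap: the weight of $\ket{W_1^{(m)}}$ outside the ground-state manifold is then at most $C\, m^2/(L^2 \Delta)$, which vanishes as $L \to \infty$ for fixed $m$.

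The main technical step is to confirm the constant $C$ above is genuinely independent of $m$, so that the admissible window scales linearly in $L$ rather than more slowly. This requires re-examining the nested-commutator structure in the proof of Corollary~\ref{cor:momentumguys}: each additional order in the BCH expansion brings in one factor of $(2\pi m/L)$ together with one more commutator with $\sum_{j=0}^{2l} x_j \hat{n}_j$, which is locally bounded by $O(l)$. Thus higher-order corrections are suppressed by additional powers of $m l / L$ and can be controlled uniformly for $|m|$ up to $O(L)$, leaving the $O(m^2/L^2)$ leading behavior intact and completing the argument.
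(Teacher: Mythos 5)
Your proposal is correct for the extensivity claim and for each \emph{fixed} $m$, but it takes a genuinely different route from the paper, and that route falls short of the full statement that \emph{all} $L$ distinct boosts are ground states at finite $L$. The paper's proof is an iteration: each symmetry sector (charge $1$, momentum $2\pi m/L$) is one-dimensional, so every $\ket{W_1^{(m)}}$ is automatically an eigenstate; Corollary~\ref{cor:momentumguys} with $m=1$ puts $\ket{W_1^{(1)}}$ within $O(1/L^2)$ of $\ket{W_1}$, hence below the gap, hence exactly degenerate; one then reapplies the \emph{same} $m=1$ argument with $\ket{W_1^{(1)}}$ as the new reference ground state to capture $\ket{W_1^{(2)}}$, and so on around the entire Brillouin zone, $m=1,\dots,L-1$. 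Each step only ever uses the corollary at unit boost, so no uniformity in $m$ is needed.

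Your direct approach instead tracks the $m$-dependence and obtains $E^{(1,m)}_{\min}-E_0\le C\,m^2/L^2$, which is vacuous once $m\sim L\sqrt{\Delta/C}$, and in fact your own control of the BCH tail requires $ml/L\ll 1$, i.e.\ $|m|\ll L/l$. So you prove that an $O(L)$ window of sectors is exactly degenerate (which does establish extensive degeneracy) but not that the boosts with $m$ a finite fraction of $L$ are ground states, which the corollary asserts ($\forall m$, with $m$ defined mod $L$). The missing idea is precisely the bootstrap: once a boost is known to be a ground state, it can serve as the base point for the next $O(1/L^2)$ step. Two smaller remarks: your ``decompose $\ket{W_1^{(m)}}$ in the $H$-eigenbasis within its sector'' step is unnecessary, since the charge-$1$ sector at fixed momentum is spanned by the single state $\ket{W_1^{(m)}}$, which is therefore an exact eigenstate rather than merely a variational trial state; and the vanishing of the linear-in-$m/L$ term, which you invoke ``by the same pick $\pm m$ argument,'' is only justified when the reference state is itself a ground state, which is exactly what the iteration supplies at each step.
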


\begin{proof}
    Let the system be described by a gapped, local Hamiltonian $H$ that commutes with both $U(1)$ and translation symmetry. This implies that there exists a basis that simultaneously diagonalises all three operators. For states with total particle number $\hat{N}=1$ this basis is spanned by the $\ket{W_1}$ state and its momentum boosts $\ket{m+1}=U_m\ket{W_1}$, where $U_m=e^{i\frac{2\pi m}{L}\sum_{x}\hat{n}_xx} $ and $m\in\{1,...,L-1\}$.

     Since these states are also necessarily eigenstates of $H$, they are either above or below the spectral gap. By Corollary~\ref{cor:momentumguys} we know that the boosts in the vicinity, i.e. when $m\in O(1)$, of $\ket{W_1}$, have an energy expectation value of $O(1/L^2)$. To satisfy this requirement, as well as being an eigenstate of $H$, these boosts must be below the spectral gap, i.e. elements of the ground state manifold.

     We may use the same arguments as in Corollary~\ref{cor:momentumguys} to show that for a ground states $\ket{m+1}$ the state $\ket{m+2}$ must be $O(1/L^2)$ in energy difference above the ground state. Due to the gap and symmetries, $\ket{m+2}$ must then also be in the ground state manifold. By doing this reiteratively for all $m\in\{1,...,L-1\}$ it follows that they are all in the ground state manifold. Since this is an extensive amount of states, the ground state manifold must always be extensive in size.
  
\end{proof}

\textit{Example Hamiltonians}. Some intuition of the corollaries can be gained by considering the following example of an (unstably) gapped Hamiltonian
\begin{align}
    H=\lambda\sum_i n_i n_{i+1}\quad,
\end{align}
where $n_i=c_i^\dag c_i$. Here $\ket{W_1}$ and momentum boosts of $\ket{W_1}$ are part of the extensive ground state manifold, while states with two neighbouring one's such as $\ket{110...0}$ are gapped excitations. The ground state degeneracy is unstable and can be lifted by a simple chemical potential term.

An interesting gapless Hamiltonian on an open spin chain with the $W$-state as a ground state is
\[H_L = -\sum_{j = 1}^{L-1} S^+_j S^-_{j+1} + hc.  - \frac12 Z_1 - \sum_{j=2}^{L-1} Z_j - \frac12 Z_L.\]
Its ground states are $\ket{0}$ and $\ket{W_1}$.

\end{document}